\definecolor{mplblue}{RGB}{31,119,180}
\newcommand*\circled[1]{\tikz[baseline=(char.base)]{\node[shape=circle,draw,inner sep=1pt, color=mplblue] (char) {\footnotesize #1};}}
\newtheorem{lemma}{Lemma}
\newtheorem{claim}{Claim}
\newtheorem{theorem}{Theorem}
\newtheorem{definition}{Definition}
\title{\LARGE \bf
PredicTor: Predictive Congestion Control for the Tor Network}
\author{Felix Fiedler$^{1,*}$ , Christoph Döpmann$^{2,*}$,  Florian Tschorsch$^{2}$ and Sergio Lucia$^{1}$%
\thanks{*The contribution of Felix Fiedler and Christoph Döpmann is equal.}%
\thanks{Felix Fiedler acknowledges the support of the Helmholtz Einstein
	International Berlin Research School in Data Science (HEIBRiDS), German
	Aerospace Center (DLR), and Technische Universit\"at Berlin.}%
\thanks{$^{1}$Chair of Internet of Things for Smart Buildings, Technische Universit\"at Berlin, Einsteinufer 17, 10587 Berlin, Germany}%
\thanks{$^{2}$Chair of Distributed Security Infrastructures, Technische Universit\"at Berlin, Einsteinufer 17, 10587 Berlin, Germany}%
}
\newcommand\copyrighttext{%
  \footnotesize \copyright{}~2020~IEEE.  Personal use of this material is permitted.  Permission from IEEE must be obtained for all other uses, in any current or future media, including reprinting/republishing this material for advertising or promotional purposes, creating new collective works, for resale or redistribution to servers or lists, or reuse of any copyrighted component of this work in other works.}
\newcommand\copyrightnotice{%
\begin{tikzpicture}[remember picture,overlay]
\node[anchor=south,yshift=10pt,text=gray] at (current page.south) {\parbox{\dimexpr\textwidth-\fboxsep-\fboxrule\relax}{\copyrighttext}};
\end{tikzpicture}%
}
\begin{document}

\maketitle
\thispagestyle{empty}
\pagestyle{empty}

\copyrightnotice

\begin{abstract}
In the Tor network,
anonymity is achieved through a multi-layered architecture,
which comes at the cost of a complex network. 
Scheduling data in this network is a challenging task and the current approach 
shows to be incapable of avoiding network congestion and allocating fair data rates.
We propose PredicTor, a distributed model predictive control approach, to tackle these challenges.
PredicTor is designed to schedule incoming and outgoing data rates on individual nodes of the Tor architecture, leading to a scalable approach.
We successfully avoid congestion through exchanging information of predicted behavior with adjacent nodes.
Furthermore, we formulate PredicTor with a focus on fair allocation of resources, 
for which we present and proof a novel optimization-based fairness approach.
Our proposed controller is evaluated with the popular network simulator ns-3,
where we compare it with the current Tor scheduler as well as with another recently proposed enhancement.
PredicTor shows significant improvements over the previous approaches, especially with respect to latency.
\end{abstract}

\section{Introduction}
The Tor network allows its users to anonymously access the Internet,
and thus serves an important societal role by supporting freedom of press and speech.
It consists of an overlay network connecting so-called relay nodes,
which can be used to establish anonymous connections.
To this end, the Tor client software builds a cryptographically-secured \emph{circuit},
a path over three relays, where each relay knows its immediate neighbors only.

While an extra delay is inevitable to gain anonymity (due to re-routing the traffic),
the performance---in terms of latency, data rates, and fairness---is however
neither optimal nor stable~\cite{reardon2008improving,DBLP:conf/p2p/DhungelSRHR10}.
One of the major shortcomings is the lack of fair rate allocation~\cite{Tschorsch2011}
and an effective congestion control~\cite{alsabah2011defenestrator,DBLP:conf/p2p/DhungelSRHR10}.
Here, \emph{congestion control} describes the nontrivial task of
scheduling data transmissions in a way
that minimizes network load while obtaining the maximum possible throughput.
Relaying data over a series of nodes, like in Tor, amplifies the problem;
especially when rising delays occur in the network.
In particular, Tor relays are unable to react to congestion,
for example by signaling upstream to throttle sending rates.

\begin{figure}
	\scriptsize
	\centering
	\def\svgwidth{1\linewidth}
	\import{graphics/}{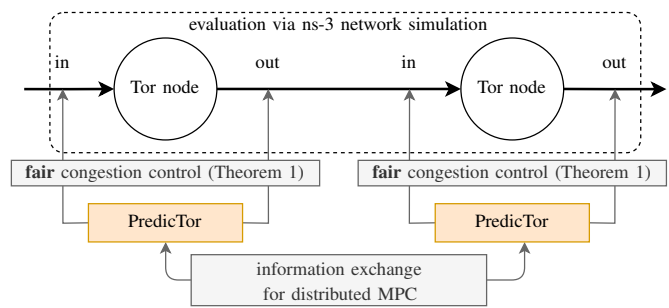}
	\caption{Overview of the proposed method.}
	\label{fig:paper_overview}
\end{figure}

Different methods have been proposed
to improve the performance of the Tor network,
including the adaptation of standard congestion control algorithms to Tor~\cite{reardon2008improving,alsabah2013pctcp},
as well as the development of tailored approaches~\cite{backtap,alsabah2011defenestrator}.
Most notably, PCTCP~\cite{alsabah2013pctcp},
which uses a dedicated TCP connection between each relay for every circuit,
has the potential to be actually deployed in Tor.
While PCTCP provides some improvements, e.g., in fairness,
it still does not provide sufficient congestion control.
Other approaches often require changes to the network infrastructure
and are therefore not directly applicable.

The problem of congestion in networks has also been studied extensively from a control theoretical perspective in the past.
Previous works include classic linear control~\cite{Mascolo1999} including PID~\cite{Yanfie2003} and state-feedback LQR control~\cite{Azuma2006}. 
It is well understood 
that delay is among the main challenges of controlling the network.
More recently, especially optimization-based methods have been applied to the problem with promising results~\cite{He2007,Mota2012}. 
Model predictive control (MPC), as applied in~\cite{Mota2012}, 
is an advanced control technique that can deal with non-linear systems and
explicitly take constraints into consideration. 
Its predictive control action is particularly suited for systems with significant delay.
Furthermore, MPC has received significant attention as a method for distributed control~\cite{Negenborn2014, CHRISTOFIDES2013}, 
where local controllers interact to jointly control an interconnected system. 
Distributed MPC is often applied to systems with a complex network character,
such as transportation systems~\cite{Dunbar2012},
energy management~\cite{Patel2016}
or process industry applications~\cite{CHRISTOFIDES2013}, 
where a centralized solution is prohibitive due to the size of the system
or privacy concerns.
In order to obtain global properties, the local action is often coordinated by exchanging information about predicted future behavior~\cite{Negenborn2014}. 

In this paper, we develop \emph{PredicTor} (see Figure~\ref{fig:paper_overview}),
a distributed MPC congestion control algorithm for the Tor network.
The distributed design is imperative, to allow scaling the network and most importantly, maintain anonymity of the users. 
In contrast to the current behavior of Tor,
PredicTor avoids congestion by generating \emph{backpressure}.
This denotes the strategy of propagating congestion back to the original sender
instead of accumulating it within the network,
and it is achieved via the information exchange
of the proposed distributed MPC.
Furthermore, PredicTor is designed with a focus on fair allocation of resources.
While optimization-based rate allocation is a well researched topic,
with equivalent formulations for TCP and other methods~\cite{He2007},
we introduce in this work a novel optimization-based
\emph{max-min fairness} formulation.
To the best of our knowledge, PredicTor is the only distributed MPC approach
to tackle the previously mentioned congestion and fairness challenges of the Tor network.
While distributed MPC has been applied to regular computer networks before~\cite{Mota2012},
our approach explicitly considers fairness and the applicability to a real network.

With a real application in mind, we design PredicTor as a modification of the Tor protocol.
For evaluation purposes, we build a prototype based on the ns-3 network simulator
and its extension nstor~\cite{backtap}.
Our results indicate that PredicTor can clearly reduce the latency of data transmission
as well as the load on the network.
As an additional contribution, our implementation of PredicTor and the required adaptations to nstor are available as an open source software project.%
\footnote{\texttt{https://github.com/cdoepmann/predictor}}

The remainder of this paper is structured as follows.
In Section~\ref{sec:tor_structure}, we present the structure of the Tor network, including related terminology and mathematical notation. 
Our main contribution is presented in Section~\ref{sec:mpc_formulation}.
First, we present and proof Theorem~\ref{theo:max_min_optim} 
which is an optimization-based method to obtain max-min fairness. 
We then discuss the dynamic system model and our distributed MPC concept
before we present the full PredicTor formulation.
In Section~\ref{sec:results}, we showcase the performance of PredicTor in a ns-3 network simulation study of an exemplary Tor topology.

\section{Structure of the Tor network}
\label{sec:tor_structure}
In order to achieve anonymity,
the Tor network~\cite{tor} provides a set of relay nodes.
These relays are used by clients to tunnel their communication through the network.
The established paths are commonly referred to as circuits
and carry equally-sized packets.
Anonymity in Tor is achieved by the fact that
a server cannot tell where client data originates from,
since the server only sees the last relay in the circuit.
The Tor relays form a so-called \emph{overlay network},
a computer network that operates on top of the public Internet.
The necessary resources (servers and bandwidth) are provided by volunteers
and are not subject to any central authority.
An exemplary Tor topography is depicted in Figure~\ref{fig:setup_example_network}.
It contains three circuits that share a set of six relays.
One of the relays was (randomly) chosen by all three circuits
and thus constitutes a possible bottleneck.
This scenario is prototypical for commonly observed behavior in the Tor network.
Note that circuits generally carry data bidirectionally.
For simplicity, we only consider one direction in this paper;
the other direction can be realized completely analogously.

\begin{figure}
	\footnotesize
	\centering
	\def\svgwidth{0.9\linewidth}
	\import{graphics/}{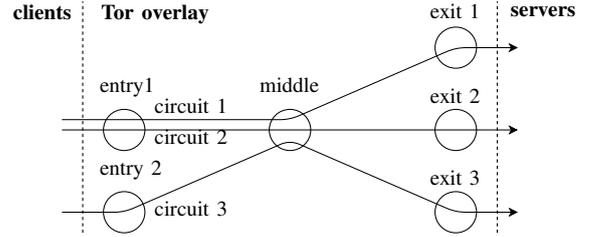}
	\caption{Exemplary topology for a Tor network.}
	\label{fig:setup_example_network}
\end{figure}

Formally, we introduce Tor as an overlay network graph $G(N,E)$ where $N$ denotes the set of nodes and $E$ the set of overlay links. The network has a total of $|N|=n$ nodes and $|E|=e$ connections. 
We denote the set of Tor circuits $P$ with $i \in P$ being the i-th circuit of the set of cardinality $|P|=p$.
$P_{\alpha} \in P$ denotes the subset of circuits traversing node $\alpha \in N$.
Generally, we refer to circuits with Roman letters and to nodes with Greek letters.
When considering the network at the circuit level, we denote with $r_i$ the data rate (in packets per second) at which a circuit~$i$ is transferring information.
Furthermore, each node $\alpha \in N$ of the overlay network has a limited capacity $C_{\alpha}$, since overlay connections share the same physical connection.
\begin{definition}
	\label{def:feasible_rate}
	A rate vector $r=[r_1, r_2, \dots, r_p]$ is feasible if:
	\begin{align}
	\forall i \in P:& \quad    0\leq r_i \quad  \text{and}\\
	\forall \alpha \in N:& \quad    \sum_{i\in P_\alpha} r_i \leq C_{\alpha}.
	\end{align}
	We denote $R_f$ the set of feasible rate vectors.
\end{definition}
Each node $\alpha \in N$ can receive, store and send data from each circuit~$i \in P_{\alpha}$.
We denote $s_{\alpha, i} $ the circuit queue (storage in number of packets) in node $\alpha$ for circuit~$i$ and the vector with all queues for each circuit in node $\alpha$ as $s_{\alpha} \in \mathbb{N}^{|P_{\alpha}|}$.
Congestion of the network results from high values of these circuit queues and can be quantified with the data backlog.
\begin{definition}
    The data backlog ($b$) of a network $G(N,E)$ is computed for all nodes $\alpha \in N$ and all circuits $i\in P$ as:
    \begin{equation}
        b = \sum_{\alpha \in N} \sum_{i \in P} s_{\alpha,i}.
    \end{equation}
\end{definition}

\vspace{.5em}

\section{PredicTor}\label{sec:mpc_formulation}
The proposed predictive controller for the Tor network (PredicTor) is developed with three objectives in mind:
Primarily, we are aiming to avoid congestion of the network by limiting the data backlog of circuits.
Secondly, we seek to fully utilize the available resources of the network,
and lastly, we require a fair allocation of these resources.

This section starts by deriving an optimization-based method to obtain global fairness in Subsection~\ref{ssec:optim_fairness}.
We also show that this formulation will satisfy our second objective and utilize the available resources.
Congestion control, our primary objective, 
can only be achieved by exchanging the predicted action between connected nodes. 
The concept for exchanging information is presented in Subsection~\ref{ssec:feedback_distr_mpc}.
We then present the full optimal control problem~(OCP) in Subsection~\ref{ssec:ocp}.
Finally, we discuss the interaction of PredicTor and a Tor relay in Subsection~\ref{ssec:controller_tor_interact}.

\subsection{Optimization-based
fairness}\label{ssec:optim_fairness}
In the following we present an optimization-based method (Theorem~\ref{theo:max_min_optim}) to achieve max-min fairness.
We consider for the derivation the global rate $r_i$ for circuit $i \in P$.
\begin{definition}
\label{def:max_min_fair}
A feasible rate vector $r^f\in R_f$ is called max-min fair, if for all circuits $i\in P$ and for all other feasible rates $\bar{r}\in R_f$ it holds that:
\begin{equation}
	\begin{gathered}
	     \bar{r}_i \geq r_i^f \Rightarrow \exists \, j\in P:  r_{j}^f \leq r_i^f \land \bar{r}_j \leq r_j^f.
	\end{gathered}
\end{equation}
This definition means that if a rate $r^f$ is max-min fair, any other feasible rate that increases the rate for the favored circuit~$i$ comes at the cost of reducing the rate for the disadvantaged circuit~$j$, which is already smaller than the rate of circuit~$i$. 
\end{definition}
\begin{definition}
	\label{def:bottleneck}
	For a circuit $i\in P_{\alpha}$ and a rate vector $r$, we denote node $\alpha \in N$ a bottleneck, if:
	\begin{equation}
	\sum_{i \in P_{\alpha}} r_i = C_{\alpha}, \quad \forall j \in P_{\alpha}: \ r_i \geq r_j
	\end{equation}
\end{definition}
\begin{lemma}
	\label{lemma:min_max_bottleneck}
	Let $r^f$ be a max-min fair rate vector. Each circuit $i \in P$ has exactly one bottleneck.
	This bottleneck is the global rate-limiting factor of the circuit under stationary conditions.
\end{lemma}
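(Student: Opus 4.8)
The plan is to establish the two claims of the lemma in turn: \emph{existence} of a bottleneck for every circuit, and then its \emph{rate-limiting} character, from which the well-definedness of the operative constraint follows.

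For existence I would argue by contradiction. Fix a circuit $i \in P$ and suppose it has no bottleneck. By Definition~\ref{def:bottleneck} this means that every node $\alpha$ traversed by $i$ is either unsaturated, $\sum_{j \in P_\alpha} r_j^f < C_\alpha$, or carries some circuit $j \in P_\alpha$ with $r_j^f > r_i^f$. I would then exhibit a feasible rate vector $\bar r$ that raises the rate of $i$ while only \emph{lowering circuits that are strictly faster than $i$}. Concretely, pick a small $\delta > 0$, set $\bar r_i = r_i^f + \delta$, and on each saturated node on the path of $i$ reduce the circuits with $r_j^f > r_i^f$ so as to free at least $\delta$ of capacity, absorbing the increase on unsaturated nodes through their slack. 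Choosing $\delta$ below both the smallest positive slack and the smallest gap $r_j^f - r_i^f$ among the finitely many affected circuits keeps all rates nonnegative and preserves their ordering, so that $\bar r \in R_f$ by Definition~\ref{def:feasible_rate}.

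The point of this construction is that every circuit whose rate is decreased satisfies $r_j^f > r_i^f$, i.e.\ none of them is \emph{disadvantaged} relative to $i$. Hence $\bar r$ increases $r_i$ without reducing any circuit $j$ with $r_j^f \leq r_i^f$, which directly contradicts the max-min fairness of $r^f$ as stated in Definition~\ref{def:max_min_fair}. Therefore at least one bottleneck node must exist. The rate-limiting property then follows from the defining equalities of a bottleneck $\alpha$: since $\sum_{j \in P_\alpha} r_j^f = C_\alpha$ and $i$ attains the maximal rate on $\alpha$, under stationary conditions $r_i^f$ cannot be enlarged without exceeding $C_\alpha$ or throttling an equally-slow or slower circuit, so $\alpha$ caps the circuit's throughput.

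I expect \emph{uniqueness} to be the main obstacle, since Definition~\ref{def:bottleneck} does not by itself preclude two distinct saturated nodes on which $i$ is simultaneously rate-maximal. My intended remedy is to show that any such node constrains $r_i^f$ to the same limiting value, so that ``the bottleneck'' is well defined as the single binding capacity constraint even if several nodes qualify formally. The recurring technical care throughout is the coupling between nodes: decreasing a circuit $j$ to free capacity at one node simultaneously relaxes every other node on the path of $j$, so the feasibility verification of $\bar r$ must be carried out globally over all of $N$ rather than node by node, and this bookkeeping is where the argument demands the most attention.
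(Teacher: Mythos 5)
The paper does not actually prove this lemma at all: it defers entirely to Bertsekas and Gallager's textbook. Your existence argument is precisely the classical proof from that reference, and it is sound: if circuit $i$ had no bottleneck, every node on its path would be either unsaturated or would carry a strictly faster circuit, so a sufficiently small increase $\delta$ in $r_i$ can be absorbed by slack or paid for by throttling only circuits $j$ with $r_j^f > r_i^f$, and decreasing rates can never violate feasibility elsewhere. One caveat: the final contradiction needs the \emph{strict} reading of Definition~\ref{def:max_min_fair}, namely that increasing $r_i$ forces some circuit $j$ with $r_j^f \leq r_i^f$ to be \emph{strictly} reduced, $\bar r_j < r_j^f$. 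As literally written (with $\bar r_j \leq r_j^f$), any unmodified circuit $j$ with $r_j^f \leq r_i^f$ satisfies the consequent vacuously, so no contradiction arises; this is a defect of the paper's formalization rather than of your argument, but you should state explicitly that you are using the intended strict version.

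Your worry about uniqueness is well founded, and your proposed resolution is the correct one. Under Definition~\ref{def:bottleneck}, ``exactly one bottleneck'' is literally false: if two nodes on the path of circuit $i$ have equal capacity and are traversed by the same set of circuits, both are saturated and $i$ is rate-maximal at both, so both qualify. What the cited reference actually establishes is the existence of \emph{at least one} bottleneck per circuit; what is unique is the limiting rate value it imposes, not the node itself. So your plan---prove existence by the exchange argument and read ``uniqueness'' as well-definedness of the binding constraint value---recovers everything the paper subsequently uses (the proof of Theorem~\ref{theo:max_min_optim} only needs that a saturated node bounds the advantage obtainable by circuit $i$ through the disadvantage imposed on a slower circuit $j$), and it is the most that can be proven from the stated definitions.
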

\begin{proof}
	The proof is shown in~\cite{bertsekas1992data}.
\end{proof}
We can now state one of the main contributions of this work: how to obtain a max-min fair rate $r$ by solving a convex optimization problem.
\begin{theorem}
\label{theo:max_min_optim}
An overlay network achieves max-min fairness with rate  $r = r^{\text{max}} - \Delta r$ as the optimal solution of:
\begin{equation}
    \label{eq:opt_max_min_fairness}
    \begin{aligned}
       c= \min_{\Delta r} \sum_{i \in P}& \Delta r_i^2\\
        \text{subject to:}\quad
         r^{\text{max}} - \Delta r&\in R_f,\\
        0\leq \Delta r &\leq r^{\text{max}}
    \end{aligned}
\end{equation}
where $r^{\text{max}}$ is an arbitrary upper limit with
${\Delta r^{\text{max}}\geq\max(C_1, C_2, \dots, C_n)}$.
\end{theorem}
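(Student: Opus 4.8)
The plan is to use that problem~\eqref{eq:opt_max_min_fairness} is a strictly convex quadratic program (Hessian $2I$) minimised over the intersection of the polytope $R_f$ with the box $[0,r^{\text{max}}]$, which is nonempty (take $\Delta r=r^{\text{max}}$, i.e. $r=0$) and compact. Hence a unique minimiser $\Delta r^\star$ exists, and I would characterise it through the Karush--Kuhn--Tucker conditions, attaching multipliers $\lambda_\alpha\ge 0$ to the capacity constraints $\sum_{i\in P_\alpha} r_i\le C_\alpha$ and $\mu_i,\nu_i\ge 0$ to the box bounds. The aim is to extract enough structure from these conditions to verify Definition~\ref{def:max_min_fair} directly, using the bottleneck characterisation of Lemma~\ref{lemma:min_max_bottleneck} as the bridge between the analytic optimality conditions and the combinatorial notion of max-min fairness.

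First I would establish the bottleneck structure and, with it, the resource-utilisation claim. Because $r^{\text{max}}\ge \max_\alpha C_\alpha$, no circuit can retain its full nominal rate (a single circuit at rate $r^{\text{max}}$ already overloads every node it traverses), so $\Delta r_i^\star>0$ and the lower bound $\Delta r_i\ge 0$ is inactive for every active circuit. Stationarity then reduces to $2\,\Delta r_i^\star=\sum_{\alpha:\,i\in P_\alpha}\lambda_\alpha$, whose strictly positive left-hand side forces some $\lambda_\alpha>0$; complementary slackness makes the corresponding node saturated, $\sum_{j\in P_\alpha} r_j^\star=C_\alpha$. Thus every circuit crosses a saturated node, which yields both the full-utilisation statement and, via Lemma~\ref{lemma:min_max_bottleneck}, a candidate bottleneck per circuit.

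The hard part will be turning this into the max-min inequality of Definition~\ref{def:max_min_fair}: given any feasible $\bar r$ with $\bar r_i\ge r_i^\star$, I must exhibit a circuit $j$ sharing $i$'s saturated node with $r_j^\star\le r_i^\star$ and $\bar r_j\le r_j^\star$, which additionally requires that the saturated node really is a \emph{bottleneck} in the sense of Definition~\ref{def:bottleneck}, i.e. that $i$ attains the maximal rate there. The genuine obstacle is that the quadratic penalty equalises \emph{reductions} rather than \emph{rates}: through $\sum_{\alpha}\lambda_\alpha$ a circuit is coupled to every saturated node it crosses, so a circuit traversing several bottlenecks is reduced more than one limited at a single node, and the resulting Euclidean projection need not coincide with the progressive-filling allocation underlying Lemma~\ref{lemma:min_max_bottleneck}. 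Closing this step is the crux on which the whole theorem rests; I would attempt it by ordering the saturated nodes by their smallest incident rate and checking the implication locally at each, and if a direct argument cannot be made to go through, I would fall back on solving~\eqref{eq:opt_max_min_fairness} iteratively, freezing the circuits bottlenecked in each round and re-optimising the remainder.
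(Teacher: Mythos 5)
Your preparatory steps are sound: the program is a strictly convex QP over a nonempty compact polytope, so a unique minimiser $\Delta r^\star$ exists, stationarity gives $2\Delta r_i^\star=\sum_{\alpha:\,i\in P_\alpha}\lambda_\alpha$, and complementary slackness forces every (reduced) circuit to cross a saturated node. But the obstacle you flag in your final paragraph is not merely the hard step of the proof --- it is fatal, and no argument can close it, because the statement as given is false. Take two nodes $\alpha,\beta$ with $C_\alpha=C_\beta=1$, circuit $1$ crossing only $\alpha$, circuit $3$ crossing only $\beta$, and circuit $2$ crossing both, with $r^{\text{max}}=1$. Progressive filling saturates both nodes at level $\tfrac12$, so the max-min fair vector is $r^f=(\tfrac12,\tfrac12,\tfrac12)$ with cost $\sum_i(\Delta r_i^f)^2=\tfrac34$. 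The unique optimum of \eqref{eq:opt_max_min_fairness}, however, is $\Delta r^\star=(\tfrac13,\tfrac23,\tfrac13)$, i.e.\ $r^\star=(\tfrac23,\tfrac13,\tfrac23)$, with cost $\tfrac23<\tfrac34$: exactly as your stationarity condition predicts, the doubly bottlenecked circuit is cut twice, $2\Delta r_2^\star=\lambda_\alpha+\lambda_\beta$ versus $2\Delta r_1^\star=\lambda_\alpha$. This $r^\star$ is not max-min fair in the sense of Definition~\ref{def:max_min_fair}: the feasible vector $\bar r=(\tfrac12,\tfrac12,\tfrac12)$ raises circuit $2$ while lowering only circuits whose rate strictly exceeds $r_2^\star$. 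So the Euclidean projection genuinely differs from the progressive-filling allocation of Lemma~\ref{lemma:min_max_bottleneck}, precisely as you suspected; your plan of verifying Definition~\ref{def:max_min_fair} from the KKT conditions cannot succeed, and neither would any other proof strategy.

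For comparison, the paper ``closes'' this very step with a pairwise exchange argument: it assumes that favoring circuit $i$ disadvantages a single circuit $j$, that the disadvantage $m$ given to $j$ upper-bounds the advantage obtainable by $i$, and, crucially, that ``only the circuits $i$ and $j$ are different for the optimal and fair solutions,'' concluding $c^*-c^f\geq 0$. In the example above this is exactly what breaks: lowering circuit $2$ by $\tfrac16$ below its fair share frees capacity at \emph{two} nodes simultaneously, letting circuits $1$ and $3$ both rise by $\tfrac16$, so that $c^*-c^f=-\tfrac{1}{12}<0$ and the unfair point is strictly cheaper. Your refusal to accept that step blindly is therefore the correct instinct, and your fallback --- iteratively re-solving the QP while freezing the circuits bottlenecked in each round --- is in fact the standard repair: that iteration reproduces progressive filling and does terminate at the max-min fair point. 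But it is a different, multi-round algorithm, not the single program \eqref{eq:opt_max_min_fairness}; the single-shot claim survives only in special topologies, such as the single shared bottleneck of Figure~\ref{fig:setup_example_network}, where each reduced circuit is constrained by one common saturated node.
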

\begin{proof}
Proof by contradiction. Assume that the optimal solution $r^*$ with optimal cost $c^*$ is not fair.
If $\forall i \in P$ it holds that $\ r_i^* \leq r^f_i$:
\begin{equation*}
	c^* = \sum_{p_i \in P_n} (\Delta r_i^*)^2 \geq \sum_{p_i \in P_n} (\Delta r_i^f)^2  = c^f
\end{equation*}
Since the fair rate $r^f$ is feasible, the assumed solution is not optimal.
On the other hand, if we favor circuit~$i$ with the rate $r_i^* \geq r_i^f$, then, by Definition~\ref{def:max_min_fair} we disadvantage circuit~$j$ with rate $r_j$:
\begin{equation*}
	\exists p_j:  r_{j}^f \leq r_i^f \land r_j^* \leq r_j^f.
\end{equation*}
This means that $\Delta r_j^f \geq \Delta r_i^f$ and $\Delta r_j^* \geq \Delta r_j^f$.
We denote the magnitude of the disadvantage given to circuit~$j$ by $\Delta r_j^*-\Delta r_j^f = m$.
Considering Definition~\ref{def:bottleneck} and Lemma~\ref{lemma:min_max_bottleneck}, we note that a disadvantage of magnitude $m$ for circuit~$j$ is an upper bound for the possible advantage that can be given to circuit~$i$:
\begin{align*}
	\Delta r_j^*- \Delta r_j^f &\geq \Delta r_i^f-\Delta r_i^*, \\
	\intertext{We can now substitute $m$:}
	 m &\geq \Delta r_i^f-\Delta r_i^*.
\end{align*}
Rearranging the terms above leads to:
\begin{align*}
	\Delta r_i^* &\geq \Delta r_i^f - m.
\end{align*}
Together with $\Delta r_j^*-\Delta r_j^f = m$ we can write the difference between the optimal cost and a a max-min fair cost as:
\begin{align*}
	c^* - c^f &= (\Delta r_i^*)^2 + (\Delta r_j^*)^2 - (\Delta r_i^f)^2 - (\Delta r_j^f)^2 \\ & \geq (\Delta r_j^f + m)^2 + 	(\Delta r_i^f - m)^2 - (\Delta r_i^f)^2 - (\Delta r_j^f)^2 \\
	& \geq 2\Delta r_j^f m - 2\Delta r_i^fm + 2m^2 \\
	& \geq 2m(\Delta r_j^f - \Delta r_i^f) + 2m^2 \geq 0,
\end{align*}
where the first equality is given by the fact that only the circuits i and j are different for the optimal and fair solutions. 
The last inequality holds because circuit~$j$ is already disadvantaged  ($\Delta r_j^f \geq \Delta r_i^f$). The last inequality implies that the fair cost is smaller or equal than the optimal cost, 
which is a contradiction and proofs that the optimal solution of problem~\ref{eq:opt_max_min_fairness} yields the max-min fair rate vector $r$.
\end{proof}

\subsection{Distributed MPC}\label{ssec:feedback_distr_mpc}
PredicTor is a distributed MPC approach, where an optimal control problem is repeatedly solved at each node $\alpha \in N$ of the network, to obtain local decisions regarding incoming and outgoing rates. 
To achieve global fairness while maintaining constrained backlogs,
adjacent nodes need to exchange information about their predicted future actions.

Predictions are obtained on the basis of a dynamic model, 
for which we denote $s_{\alpha,i}^k$ the queue of a circuit~$i$ in node~$\alpha$ and at time step $k$. The dynamic model equation can be written as:
\begin{align}
\label{eq:mpc_model}
s_{\alpha,i}^{k+1} &= s_{\alpha,i}^k + \Delta t(r_{\text{in}, \alpha,i}^k - r_{\text{out}, \alpha,i}^k),
\end{align}
where $\Delta t$ denotes the sampling time.
We differentiate between incoming ($r_{\text{in},\alpha}$) and outgoing ($r_{\text{out},\alpha}$) rate, which can vary from
the overall rate for circuit~$i$, due to local storage terms.

For the interaction of multiple nodes, we denote  $\alpha \in N$ the currently considered node, with connections to predecessor ($\beta$) and successor ($\gamma$) nodes.
For the current node $\alpha$ it is irrelevant whether the incoming data comes from several nodes or only from a single node. For this reason, we assume that all incoming data for all different circuits comes from a single predecessor node $\beta$.

To further facilitate the statement of the optimization problem as well as the investigation of the proposed method, we assume in the following that all connections in $E$ of the network $G(N,E)$
experience a constant delay which is equivalent to the timestep ($\Delta t$) of the control problem.
We want to emphasize that the proposed algorithm is not restricted to that case
and can be easily adapted for the case of varying delays.

Information is exchanged at each MPC time step, such that node $\alpha$ receives messages with predicted trajectories of all connected nodes.
The interaction of node $\alpha$ with its  incoming node $\beta$ is shown in Figure~\ref{fig:feedback_exchange}.
\begin{figure}
	\footnotesize
	\centering
	\def\svgwidth{0.8\linewidth}
	\import{graphics/}{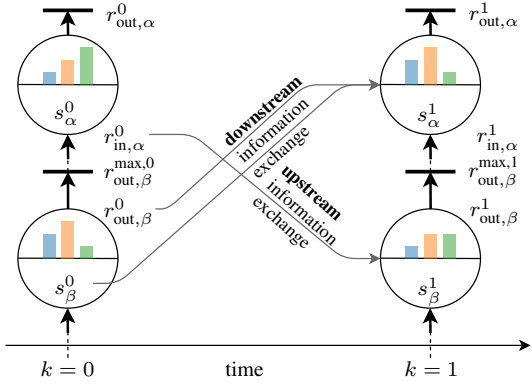}
	\caption{Information exchange between nodes $n_{\alpha}$ and $n_{\beta}$.}
	\label{fig:feedback_exchange}
\end{figure}
We differentiate between upstream and downstream information exchange.
This is important because downstream messages travel with the data and latency between nodes can be omitted.
Upstream messages, on the other hand, are traveling opposed to the data and are therefore delayed.
The downstream information contains the predicted outgoing rate $\mathbf{r}_{\text{out},\beta}^k$ and the predicted circuit queue $\mathbf{s}_{\beta}^k$ from the predecessor node. 
For clarity, we denote trajectories with bold letters, such that 
$\mathbf{r}_{\text{out},\beta}^k=[r_{\text{out},\beta}^k, r_{\text{out},\beta}^{k+1},\dots, r_{\text{out},\beta}^{k+N_{\text{horz}}}]$, where
$N_{\text{horz}}$ is the MPC prediction horizon.

In the upstream direction, node $\alpha$ sends information about the predicted incoming rate $\mathbf{r}_{\text{in},\alpha}^k$ and receives $\mathbf{r}_{\text{in},\gamma}^k$ from its successor.
The exchanged information is affecting the local outgoing rate as:
\begin{equation}
\label{eq:mpc_pred_r_out_max}
\mathbf{r}_{\text{out},\alpha}^{k} \leq
\mathbf{r}_{\text{out},\alpha}^{\text{max},k}
=\mathbf{r}_{\text{in},\gamma}^{k-1},
\end{equation}
where we consider $\mathbf{r}_{\text{in},\gamma}^{k-1}$ from the previous time step since upstream information is delayed.
The formulation in \eqref{eq:mpc_pred_r_out_max} means that our successor can directly limit our outgoing rate.
Furthermore, the current node $\alpha$ estimates the queue size of its predecessor node $\beta$ based on $\mathbf{r}_{\text{out},\beta}^k$, $\mathbf{s}_{\beta}^k$ as well as $\mathbf{r}_{\text{in},\alpha}^k$,
where the latter is an optimization variable. 
The estimated queue size of node $\beta$ from the perspective of node $\alpha$ at time $k$ is denoted as $\tilde{s}_{\alpha|\beta}^{k}$.
We introduce the state variable $\Delta s_{\alpha|\beta}$
which allows us to formulate an expression for $\tilde{s}_{\alpha|\beta}^{k}$:
\begin{subequations}
	\label{eq:mpc_pred_predecessor}
	\begin{align}
	\tilde{s}_{\alpha|\beta}^{k} &= s_{\beta}^k - \Delta s_{\alpha|\beta}^{k}, \\
	\Delta s_{\alpha|\beta}^{k+1} &= \Delta s_{\alpha|\beta}^{k} +\Delta t( r_{\text{in},\alpha}^k - r_{\text{out},\beta}^k)
	\label{eq:mpc_update_eq_delta_s}.
	\end{align}
\end{subequations}%
Equation~\eqref{eq:mpc_pred_predecessor} states that any value $r_{\text{in},\alpha}^k\neq r_{\text{out},\beta}^k$ will adjust the predicted circuit queue at the predecessor node.
This plays an important role for the distributed MPC formulation, as incoming rates $\mathbf{r}_{\text{in},\alpha}$ can explicitly consider the availability of data, by introducing: $\tilde{\mathbf{r}}_{\alpha|\beta}^{k}\geq 0$.
\vspace{4em}

\subsection{Optimization problem}\label{ssec:ocp}
We propose the following OCP for congestion control with fairness formulation for node $\alpha$, predecessor node $\beta$ and successor node $\gamma$.
\begin{small}
\begin{subequations}
\label{eq:mpc_full_optim}
\begin{align}
\min_{
\Delta \mathbf{r}_{\text{out},\alpha},\Delta \mathbf{r}_{\text{in},\alpha}, \mathbf{s}_{\alpha}, \Delta \mathbf{s}_{\alpha|\beta}
}&
\sum_{k=0}^{N_{\text{horz}}} d^k \left((\Delta r_{\text{in},\alpha}^k)^2+ (\Delta r_{\text{out},\alpha}^k)^2\right)
\label{eq:mpc_full_optim_01}\\
\text{subject to }&: \nonumber\\
s_{\alpha}^{k+1} &= s_{\alpha}^k + \Delta t(r_{\text{in},\alpha}^k - r_{\text{out}, \alpha}^k),\\
\Delta s_{\alpha|\beta}^{k+1} &= \Delta s_{\alpha|\beta}^{k} +\Delta t( r_{\text{in},\alpha}^k - r_{\text{out},\beta}^k),\\
0&\leq r^{\text{max}} - \Delta r_{\text{in},\alpha}^k,\\
0&\leq r^{\text{max}} - \Delta r_{\text{out},\alpha}^k \leq r_{\text{out},\alpha}^{\text{max},k},\\
\sum_{i\in P_{\alpha}}  &\left(r^{\text{max}} - \Delta r_{\text{in},\alpha}^k \right) \leq C_{\alpha}^{\text{in}}\\
\sum_{i\in P_{\alpha}}  & \left(r^{\text{max}} - \Delta r_{\text{out},\alpha}^k \right)\leq C_{\alpha}^{\text{out}}\\
0 &\leq s_{\alpha}^k \leq s_{\alpha}^{\text{max}},
\label{eq:mpc_full_optim_sc_max}\\
0 &\leq  s_{\beta}^k - \Delta s_{\alpha|\beta}^{k}\\
s_{\alpha}^0 &= s_{\alpha}^{\text{init}},\ \Delta s_{\alpha|\beta}^{0} = 0.\\
&\forall k = 0 ,\dots, N_{\text{horz}} \nonumber
\end{align}
\end{subequations}
\end{small}%
To solve \eqref{eq:mpc_full_optim_01}, the predicted trajectories of adjacent nodes $\mathbf{r}_{\text{in},\gamma}^{k-1}$, 
$\mathbf{r}_{\text{out},\beta}^k$ and $\mathbf{s}_{\beta}^k$,
as well as the current size of the circuit queue in the current
node $s_{\alpha}^{\text{init}}$ have to be supplied.
Note that according to \eqref{eq:mpc_pred_r_out_max}, 
we set $\mathbf{r}_{\text{out},\alpha}^{\text{max},k}
=\mathbf{r}_{\text{in},\gamma}^{k-1}$.

The objective in~\eqref{eq:mpc_full_optim_01} is motivated by the presented Theorem~\ref{theo:max_min_optim} but with some important adaptations.
Most notably, we introduced $\Delta r$ variables for both the incoming and outgoing rates.
Introducing the control variable $\Delta r_{\text{in},\alpha}$ allows to control the incoming rate.
This  is of significant importance for the desired congestion control as it induces \textit{backpressure} and data will be stopped from entering the network if it cannot be forwarded.
The quadratic term in $\Delta r_{\text{out},\alpha}$ ensures that the circuit queue is emptied even if there are no new packets entering the node.

The objective in~\eqref{eq:mpc_full_optim_01} is further modified by introducing a discount factor ($d$).
This is necessary because naively implementing our presented fairness formulation also results in fairness along the prediction horizon,
where it is always preferable to increase the rate of the smallest element in a sequence for a given circuit.
In practice, however, we want to send and receive as soon as possible as long as \emph{instantaneous} fairness is achieved.
In Appendix A we present a guideline on how to choose an upper bound for $d$ to obtain the desired behavior.

We implement PredicTor based on CasADi \cite{Andersson2018} in combination with IPOPT \cite{Andreas2006} and MA27\footnote{
HSL. A collection of Fortran codes for large scale scientific computation. \texttt{http://www.hsl.rl.ac.uk/}} linear solver
for fast state-of-the-art optimization.

\begin{figure*}
	\centering
	\includegraphics[width=1\linewidth]{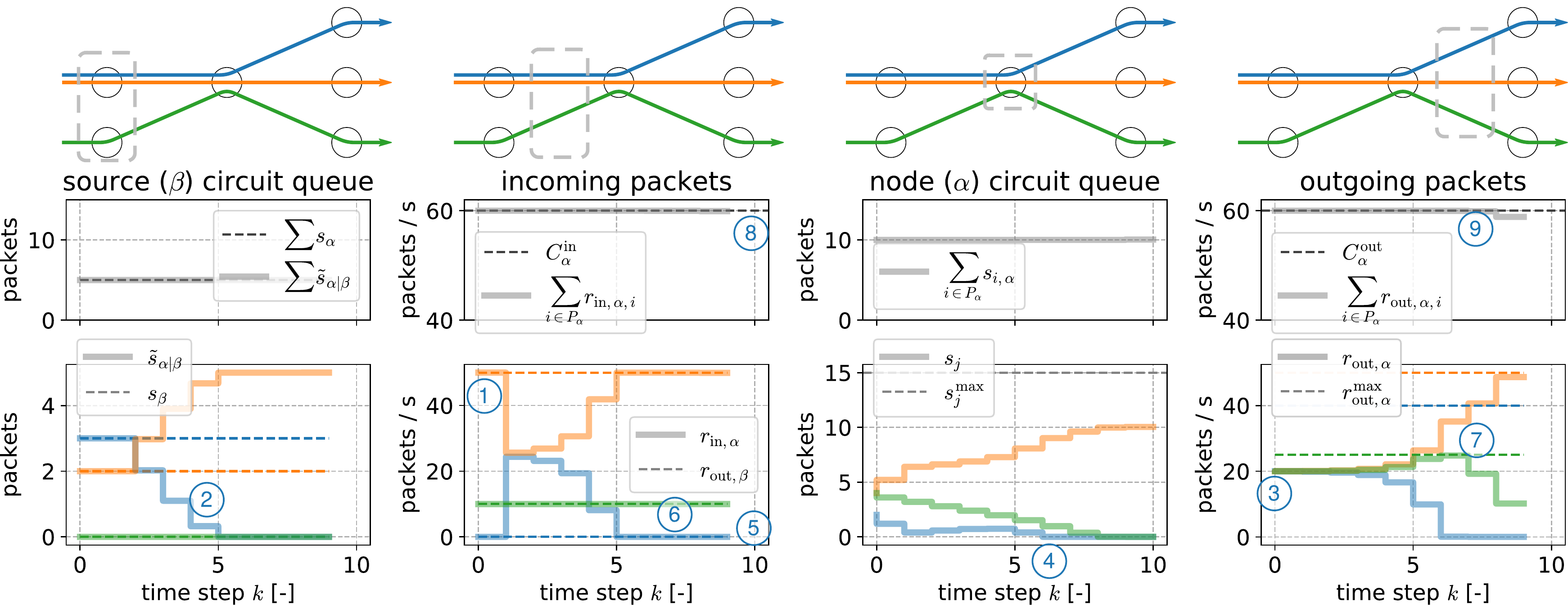}
	\caption{Open-loop MPC prediction for  the central node as shown in Tor topology from Figure~\ref{fig:setup_example_network}. The circled numbers are referred to in the results discussion.}
	\label{fig: open_loop_pred}
\end{figure*}%

\subsection{Interaction of controller and network}\label{ssec:controller_tor_interact}

The proposed controller is implemented on the application layer of each node in the Tor network. 
At each timestep, problem~\eqref{eq:mpc_full_optim} is solved with the most recent measurement of the circuit queue $s_{\alpha}^{\text{init}}$ of the current node $\alpha \in N$ and with the received information from adjacent nodes.
The optimal solution of \eqref{eq:mpc_full_optim} is converted to trajectories of incoming ($\mathbf{r}_{\text{in},\alpha}$) and outgoing ($\mathbf{r}_{\text{out},\alpha}$) rates,
where the first element of $\mathbf{r}_{\text{out},\alpha}$
is used to control at which rate data is sent.
In particular, we employ a token-bucket method~\cite{bertsekas1992data}
to shape the outgoing traffic.
Note that we are controlling the data rates 
on a per-circuit basis, which is similar to the aforementioned PCTCP~\cite{alsabah2013pctcp}.

In order to exchange the trajectories between relays,
we extend the Tor protocol with respective control messages.
Entry (exit) nodes do not have a
predecessor (successor) to exchange data.
In this case, we provide reasonable, synthetic trajectories to bootstrap the data transfer,
and behave accordingly.
For example, the first node in a circuit reads data from its source
according to its computed incoming rate.

While PredicTor is generally agnostic to the underlying transport protocol,
we implement it using TCP as a reliability mechanism,
to avoid packet loss and packet reordering.

\section{Results}\label{sec:results}
Evaluating the proposed controller on the live Tor network
is neither feasible nor responsible, due to its sensitive nature.
Instead, the performance of PredicTor is investigated in simulation studies.
To evaluate the performance of PredicTor,
we use ns-3, a discrete-event network simulator that offers
a safe simulation environment.
It achieves a high degree of realism
by emulating the network down to the physical layer,
including queueing effects, potential packet loss, and other network effects.

For the evaluation,
we focus on several core metrics that are relevant in this context:
The amount of data transferred within in a given time span gives an indication
of how well the available resources are utilized.
Comparing these values between circuits constitutes a measure of fairness.
On the other hand, the byte-wise latency
between data entering and leaving the network
is important to characterize the applicability of the system for end users.
Latency is strongly influenced by the size of queues within in the network.
We therefore also consider the backlog,
which constitutes a metric
for the overall load of the network.

The results presented in the following are obtained
with a discount factor as shown in \eqref{eq:exp_discount_factor}, 
where we choose $d_0=\frac{1}{3}$, as discussed in Appendix A.

\subsection{Open-loop prediction}
We begin this section by investigating the decision-making process of the proposed controller in~\eqref{eq:mpc_full_optim} by studying an open-loop prediction.
This allows to highlight several interesting aspects of the behavior of PredicTor,
before we present the closed-loop distributed application.
In Figure~\ref{fig: open_loop_pred}, we showcase a result of~\eqref{eq:mpc_full_optim} for the central node of the presented Tor topology in Figure~\ref{fig:setup_example_network}.
Again, we denote $\alpha$ the current node, $\beta$ its predecessor and $\gamma$ its successor.

We created a synthetic scenario, consisting of trajectories $\mathbf{r}_{\text{out},\alpha}^{\text{max}}$,
$\mathbf{r}_{\text{out},\beta}$, $\mathbf{s}_{\beta}$, as well as the initial circuit queues ($s_{\alpha}^{\text{init}}$).
For the incoming connections, the proposed controller can now determine the optimal trajectory $\mathbf{r}_{\text{in},\alpha}$.
Since we need to consider the time delay (upstream information exchange),
changes to the outgoing rate $\mathbf{r}_{\text{out},\beta}$ are not immediately possible~\circled{1}.
The effect of the change $\mathbf{r}_{\text{in},\alpha}-\mathbf{r}_{\text{out},\beta}$  shows itself in the corrected prediction of the source circuit queue.
We see, for example, how $\mathbf{r}_{\text{in},\alpha,1}$ for circuit 1 is chosen such that the queue for that circuit is emptied at the source node~\circled{2}.

For the outgoing connections, we obtain the optimal trajectory ($\mathbf{r}_{\text{out},\alpha}$).
We notice that at the beginning of the horizon,
all circuits are assigned the same fair rate~\circled{3}.
Over time, the controller first reduces the rate for circuit 1 ($\mathbf{r}_{\text{out},\alpha,1}$), as its circuit queue is emptying~\circled{4} and the incoming rate is also vanishing~\circled{5}.
Afterwards, the controller reduces the rate for circuit 2 ($\mathbf{r}_{\text{out},\alpha,2}$)  to meet exactly the incoming rate ($\mathbf{r}_{\text{in},\alpha,2}$)~\circled{6}.
We can also see how individual constraints for $\mathbf{r}_{\text{out},\alpha}^{\text{max}}$ are obeyed~\circled{7} as well as capacity constraints for incoming~\circled{8} and outgoing rates~\circled{9}.
 
\subsection{Comparison}
In this section, we showcase the closed-loop behavior of PredicTor for two different scenarios and the presented Tor topology from Figure~\ref{fig:setup_example_network}.
We compare the performance of PredicTor with Tor (current standard) as well as PCTCP.
Closed-loop means that at each MPC step we are updating the current state of the system from measurements obtained with ns-3.
Furthermore, we receive updated information from all adjacent nodes.
Two scenarios are investigated. In scenario 1, circuits 1-3 start sending at the beginning of the simulation window.
Circuits 1 and 3 have an infinite source of packets to forward, whereas circuit 2 stops and restarts twice during the simulated window.
In  scenario 2, all circuits have an infinite source of packets.
This scenario is considered for the fairness evaluation
because it better approximates stationary behavior.

\begin{figure}
	\small
	\centering
	\def\svgwidth{1\linewidth}
	\import{graphics/}{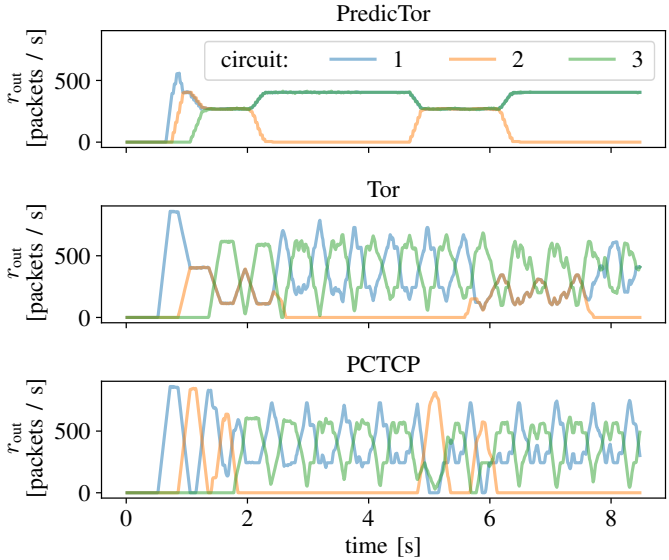}
	\caption{Comparison of outgoing rates ($r_{\text{out}}$) for  the central node as shown in the Tor topology from Figure~\ref{fig:setup_example_network} for scenario 1. 
		Rates for all methods are calculated from ns-3 simulation results as the number of packets that are forwarded within one sampling time step ($\Delta t=0.04\si{s}$).}
	\label{fig: compare_flavour_r_out}
\end{figure}

In Figure~\ref{fig: compare_flavour_r_out}, we display the outgoing rates ($r_{\text{out}}$) of the middle node for scenario 1 over the course of the simulation time.
Note that all rates are obtained from the ns-3 simulation.
PredicTor shows a desirable behavior with constant, sustainable rates and smooth transitions when circuit 2 stops and restarts.
Fair behavior can be observed in these transitions: all circuits share the same rate during activity and circuit 1 and 3 are allocated the same, higher rate when circuit 2 stops sending. The sum of all rates is visibly constant over time, thus PredicTor is fully utlilizing the available resources.
On the other hand, Tor and the PCTCP adaptation show  erratic, oscillatory behavior where bursts are followed by very low rates. 
Fairness cannot be assessed visually for Tor and PCTCP, 
which is why we quantify it later in Table~\ref{tab:flavor_comparison}.

We further compare PredicTor, Tor, and PCTCP in Figure~\ref{fig: compare_flavour_backlog} and \ref{fig: compare_flavour_latency},
where we display the backlog and latency. 
PredicTor succeeds at its primary goal of sustaining a manageable backlog, 
especially compared to Tor's and PCTCP's approach.
The importance of this effective congestion control becomes apparent in Figure~\ref{fig: compare_flavour_latency},
where we compare histograms for the latencies of received packets.
PredicTor significantly improves on Tor and PCTCP with an average latency of 106~ms,
in contrast to 558~ms and 624~ms, where a theoretical minimum of 80~ms is possible.

\begin{figure}
	\small
	\centering
	\def\svgwidth{1\linewidth}
	\import{graphics/}{compare_flavour_backlog_03.pdf_tex}
	\caption{Comparison of total backlog (packets in the network) in the Tor topology from Figure~\ref{fig:setup_example_network} for scenario 1. 
	Backlogs are calculated from ns-3 simulation results.}
	\label{fig: compare_flavour_backlog}
	\vspace{0.2cm}
	\def\svgwidth{1\linewidth}
	\import{graphics/}{compare_flavour_latency_03.pdf_tex}
	\caption{Histogram of latency for received packets in the Tor topology from Figure~\ref{fig:setup_example_network} for scenario 1. 
	Latencies are calculated from ns-3 simulation results
	and cumulated for all three circuits.}
	\label{fig: compare_flavour_latency}
\end{figure}%

\begin{table}
	\caption{Comparison of latency and transferred data.}
	\renewcommand{\arraystretch}{1.3}
	\begin{tabular}{lllllll}
		\toprule
		& \multicolumn{3}{l}{\textbf{mean latency}$^{1}$} & \multicolumn{3}{l}{\textbf{data transferred}} \\
		& \multicolumn{3}{l}{[ms]} & \multicolumn{3}{l}{[$\times 10^5$ packets]} \\
		circuit& PredicTor  & Tor  & PCTCP  & PredicTor   & Tor    & PCTCP  \\
		\midrule
		1     & 103        & 536  & 596     & 8.93        & 6.82   & 8.72    \\
		2     & 117        & 523  & 669     & 8.92        & 6.82   & 9.58    \\
		3     & 105        & 589  & 643     & 8.93        & 13.13  & 8.69    \\
		\midrule
		\textbf{Total} & 106        & 558  & 624     & 26.78       & 26.78  & 26.98\\
		\bottomrule
		\multicolumn{7}{r}{\footnotesize $^{1}$scenario 1,  $^{2}$scenario 2}
	\end{tabular}
	\label{tab:flavor_comparison}
\end{table}

In Table~\ref{tab:flavor_comparison}, we summarize and compare
the mean latency as well as the total number of received packets (throughput)
for each circuit.
Regarding throughput, the three methods perform similarly,
with the difference that only PredicTor achieves near perfect fairness. 
Tor clearly discriminates circuit 1 and 2 which share a connection, 
while PCTCP, as expected, manages to revise this effect to some extent.

While being clearly advantageous with respect to backlog and latency,
our proposed congestion controller comes at the cost of using network capacity for the exchange of messages for distributed MPC.
These messages are not included in the presented figures above.
We quantify their effect for the presented scenario 
and found that this overhead would reduce the throughput by 5.39~\%.

Furthermore, it is clear that PredicTor introduces significant complexity compared to the previous methods.
However, the optimization problem~\eqref{eq:mpc_full_optim} is convex, which guarantees a global solution in polynomial time.
For the given scenario, we obtain a solution in around 0.1~ms (laptop-grade CPU),
which is sufficient for a timestep of 40~ms.
The problem complexity (number of optimization variables and constraints) grows linearly with the number of circuits per node 
and it is therefore expected that also realistically large topologies can be tackled with the approach in real-time.

\section{Conclusion}
In this work, we have proposed a novel model predictive control formulation 
to tackle the challenge of congestion in Tor, whilst fairly allocating resources.
PredicTor is a distributed approach that relies on exchanging information 
in the form of predicted actions to all adjacent nodes.
We evaluate the proposed method in the state-of-the-art network simulator ns-3
and compare results to the present method of congestion control in Tor, 
as well as an adaption thereof (PCTCP).
PredicTor significantly outperforms both methods in terms of latency.
In our test scenario, we reduced the mean latency from 558~ms (Tor) and 624~ms (PCTCP)
to just 106~ms. 
PredicTor is at the same time superior in fairness and has a similar throughput.
The exchange of information slightly reduces this last figure by 5.39~\%,
which is found to be an acceptable trade-off.

\addtolength{\textheight}{-0cm}   %

\section*{APPENDIX}
\subsection{Discount Factor}
\label{app:discount_factor}
\begin{claim}\label{claim:discount_factor}
    Let the discount factor take the form:
    \begin{equation}
        \label{eq:exp_discount_factor}
        d^{k+1} = d_0 \cdot d^{k},
    \end{equation}
    with $d_0 \leq \frac{1}{3}$.
    Let $\Delta \tilde{\mathbf{r}}_{\text{out},\alpha,i}$, $\Delta \tilde{\mathbf{r}}_{\text{in},\alpha,i}$ be a feasible solution of \eqref{eq:mpc_full_optim}.
    Under the assumption that for circuit $i\in P_{\alpha}$:
    \begin{equation}
    \label{eq:discount_f_ass01}
        \Delta \tilde{r}_{\text{in},\alpha,i}^{k} \leq
        \Delta \tilde{r}_{\text{in},\alpha,i}^{k+1},
    \end{equation}
    and if it is feasible, the optimal solution $ \Delta \mathbf{r}_{\text{in},\alpha}^{*}$ of \eqref{eq:mpc_full_optim} 
    will have the following property:
    \begin{align*}
        \Delta r_{\text{in},\alpha,i}^{k^*} &= \Delta \tilde{r}_{\text{in},\alpha,i}^{k}-m^k,\\
        \Delta r_{\text{in},\alpha,i}^{{k+1}^*} &= \Delta \tilde{r}_{\text{in},\alpha,i}^{k+1}+m^k,
    \end{align*}
    where $0<m^k\leq \Delta \tilde{r}_{\text{in},\alpha,i}^{k}$.
    This means that it is optimal 
    to reduce the rate at time $k$ by magnitude $m^k$ whilst increasing the rate at $k$ by the same magnitude.
    The same holds for the outgoing rate ($\Delta \mathbf{r}_{\text{out},\alpha,i}$).
\end{claim}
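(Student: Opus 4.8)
The plan is to prove this by an exchange argument in the same spirit as the proof of Theorem~\ref{theo:max_min_optim}: starting from the feasible trajectory $\Delta\tilde{\mathbf{r}}_{\text{in},\alpha,i}$, I would construct a competing candidate that front-loads the rate --- it lowers $\Delta r_{\text{in},\alpha,i}^k$ by $m^k$ and raises $\Delta r_{\text{in},\alpha,i}^{k+1}$ by the same $m^k$ --- and show that, whenever such a shift is feasible, it strictly lowers the discounted objective~\eqref{eq:mpc_full_optim_01}. Since the objective is strictly convex in the rate variables and the feasible set of~\eqref{eq:mpc_full_optim} is convex, the minimizing rates are unique, so exhibiting an improving feasible direction at any non--front-loaded $\tilde{\mathbf r}$ forces the optimum to take the stated form. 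The argument for $\Delta\mathbf r_{\text{out},\alpha,i}$ is identical, so I would treat only the incoming rate.

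The first block of work is feasibility. The decisive observation is that the proposed move keeps the two-step sum $\Delta r_{\text{in},\alpha,i}^k+\Delta r_{\text{in},\alpha,i}^{k+1}$ --- and hence $r_{\text{in},\alpha,i}^k+r_{\text{in},\alpha,i}^{k+1}$ --- unchanged. Because the recursions for $s_\alpha$ and $\Delta s_{\alpha|\beta}$ are driven by $r_{\text{in},\alpha}$ only through these sums, every state from index $k+2$ onward, together with all downstream quantities, is left untouched, and only the single intermediate state at $k+1$ is perturbed. I would then check, one constraint at a time, that this perturbation respects~\eqref{eq:mpc_full_optim}: raising $r_{\text{in},\alpha,i}^k$ must not exceed the incoming capacity $C_\alpha^{\text{in}}$ nor push $s_\alpha^{k+1}$ above $s_\alpha^{\max}$, and it must not drain more than the predecessor holds, i.e.\ it must keep $s_\beta^{k+1}-\Delta s_{\alpha|\beta}^{k+1}\ge 0$. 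Each of these yields an upper bound on $m^k$; a strictly positive admissible $m^k$ exists precisely when there is slack, which is exactly the content of the hypothesis ``if it is feasible'' together with the bound $0<m^k\le\Delta\tilde r_{\text{in},\alpha,i}^k$.

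The second block is the cost comparison, and this is where $d_0\le\tfrac13$ enters. Writing the change of the two affected summands as $d^k\big[(\Delta\tilde r^k-m)^2-(\Delta\tilde r^k)^2\big]+d^{k+1}\big[(\Delta\tilde r^{k+1}+m)^2-(\Delta\tilde r^{k+1})^2\big]$ and using $d^{k+1}=d_0 d^k$, the difference factors as $d^k m\big[-2\Delta\tilde r^k+2d_0\Delta\tilde r^{k+1}+(1+d_0)m\big]$. The binding configuration is the symmetric one $\Delta\tilde r^k=\Delta\tilde r^{k+1}=a$ with the maximal shift $m=a$: here the two terms cost $d^k a^2(1+d_0)$ before the move and $d^{k+1}(2a)^2=4d_0 d^k a^2$ after it, so the move helps exactly when $4d_0\le 1+d_0$, that is $d_0\le\tfrac13$. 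The factor appears because doubling a $\Delta r$ multiplies its square by four --- a net threefold increase --- and this penalty, incurred at the discounted step $k+1$, must be dominated by the saving at the full-weight step $k$. The monotonicity hypothesis~\eqref{eq:discount_f_ass01} orients the exchange: it ensures we move the increase onto the more heavily discounted step while relieving the full-weight step, the direction in which the discount can pay off.

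I expect feasibility to be the real obstacle rather than the algebra. Propagating the single-step perturbation cleanly through the coupled dynamics of $s_\alpha$ and $\Delta s_{\alpha|\beta}$, and in particular through the data-availability constraint tied to the predecessor queue, requires care, and the exact admissible magnitude $m^k$ is set by whichever constraint becomes active first. A second subtlety is that when the optimal shift is interior rather than maximal, one must recover it from the stationarity condition of the one-dimensional strictly convex problem in $m$ and argue that the claimed form still holds; the degenerate no-slack cases are precisely those excluded by the ``if it is feasible'' clause.
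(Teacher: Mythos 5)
Your cost comparison is essentially the paper's own proof: the same two-term exchange, the same expansion using $d^{k+1}=d_0d^k$ (your factorization $d^k m\bigl[-2\Delta\tilde r^k+2d_0\Delta\tilde r^{k+1}+(1+d_0)m\bigr]$ is exactly the paper's expanded inequality), and the same reduction to the symmetric worst case $\Delta\tilde r^k=\Delta\tilde r^{k+1}=m$, giving $4d_0\le 1+d_0$, i.e.\ $d_0\le\tfrac{1}{3}$. Where you differ is in scope, not in route: the paper reads the clause ``if it is feasible'' as a hypothesis and therefore never performs your feasibility propagation (the two-step-sum invariance of the dynamics, the bound on $m^k$ from the first active constraint), nor does it invoke strict convexity or uniqueness of the optimizer; it stops once the exchange is shown not to increase the cost. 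Your feasibility paragraph is sound extra work, but be aware it answers a question the claim deliberately assumes away.

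The one step that deserves scrutiny is the ``binding configuration'' assertion, and here you reproduce, rather than close, a gap that is also in the paper. For the symmetric point with maximal shift to be the worst case, the expression $-2\Delta\tilde r^k+2d_0\Delta\tilde r^{k+1}+(1+d_0)m$ must be maximized at $\Delta\tilde r^{k+1}=\Delta\tilde r^k$, $m=\Delta\tilde r^k$; since the coefficient of $\Delta\tilde r^{k+1}$ is positive, this requires $\Delta\tilde r^{k+1}\le\Delta\tilde r^k$ --- the \emph{reverse} of \eqref{eq:discount_f_ass01} as printed. Under the hypothesis as literally stated, $\Delta\tilde r^{k+1}$ may exceed $\Delta\tilde r^k$ by an arbitrary amount, the expression is unbounded above, and the exchange can strictly increase the cost (take $\Delta\tilde r^k=m=1$, $\Delta\tilde r^{k+1}=100$, $d_0=\tfrac{1}{3}$). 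The paper's corresponding step --- ``due to \eqref{eq:discount_f_ass01}, the inequality still holds when substituting $\Delta\tilde r^{k+1}=\Delta\tilde r^k$'' --- suffers from the identical directional problem: that substitution is a best case, not a worst case, under the printed inequality. The intended hypothesis (consistent with $r=r^{\max}-\Delta r$ and the goal of sending as early as possible) is that the plan is back-loaded, $\Delta\tilde r^{k+1}\le\Delta\tilde r^k$; with that reversal the monotonicity argument you sketch makes both your proof and the paper's go through. Your reading of \eqref{eq:discount_f_ass01} as merely ``orienting the exchange'' glosses over exactly this point, so as written your proof, like the paper's, is valid only after the inequality in the hypothesis is flipped.
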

\begin{proof}
We denote with $c_{i,k}(m)$ the cost of \eqref{eq:mpc_full_optim}, where for circuit~$i$ at time $k$ the rate $\Delta \tilde{r}_{\text{in},\alpha,i}^{k}$ has been decreased by magnitude $m^k$ at the cost of increasing $\Delta \tilde{r}_{\text{in},\alpha,i}^{k+1}$ at $k+1$ by the same magnitude.
We want to find a value of $d_0$ for
$0 < m^k\leq\Delta \mathbf{r}_{\text{in},\alpha,i}^{k}$, such that:
\begin{equation*}
    c_{i,k}(m)- c_{i,k}(0) \leq 0.
\end{equation*}
After subtracting all unchanged terms, we obtain:
\begin{equation*}
\begin{gathered}
d^k \left(\Delta \tilde{r}_{\text{in},\alpha,i}^k - m^k \right)^2
 + d^{k+1} \left(\Delta \tilde{r}_{\text{in},\alpha,i}^{k+1} + m^{k} \right)^2\\ - d^k \left(\Delta \tilde{r}_{\text{in},\alpha,i}^k \right)^2 - d^{k+1} \left(\Delta \tilde{r}_{\text{in},\alpha,i}^{k+1} \right)^2
 \leq 0.
 \end{gathered}
\end{equation*}
We consider \eqref{eq:exp_discount_factor} and expand the quadratic terms, such that:
\begin{equation*}
\begin{gathered}
-2\Delta \tilde{r}_{\text{in},\alpha,i}^k m^k+\left(m^k\right)^2
+ d_0 \left(2\Delta \tilde{r}_{\text{in},\alpha,i}^{k+1} m^k+ \left(m^k \right)^2 \right) \leq 0.
\end{gathered}
\end{equation*}
Due to \eqref{eq:discount_f_ass01}, the inequality still holds when substituting:
\begin{equation*}
\Delta \tilde{r}_{\text{in},\alpha,i}^{k+1} = \Delta \tilde{r}_{\text{in},\alpha,i}^{k}. 
\end{equation*}
Considering that $0 \leq m^k$, we can further simplify the inequality:
\begin{equation*}
-2\Delta \tilde{r}_{\text{in},\alpha,i}^k + m^k
+ d_0 \left(2\Delta \tilde{r}_{\text{in},\alpha,i}^k+ m^k \right) \leq 0.
\end{equation*}
The inequality still holds when substituting $\Delta \tilde{r}_{\text{in},\alpha,i}^k = m^k$, since $m^k \leq \Delta\tilde{r}_{\text{in},\alpha,i}^k$:
\begin{align}
d_0 \left(3m^k \right) &\leq 
m^k, \nonumber\\
d_0 &\leq \frac{1}{3}.
\end{align}
The proof is identical for the outgoing rate ($\Delta \mathbf{r}_{\text{out},\alpha,i}$).
\end{proof}

\bibliography{IEEEabrv,2020_cctaconf}

\end{document}